\documentclass[conference,10pt]{IEEEtran}
\usepackage[top=0.75in, bottom=1.1 in, left=0.64in, right=0.64in]{geometry}
\usepackage[T1]{fontenc}
\usepackage{amsmath,amssymb,cite,booktabs}
\usepackage{comment}
\usepackage{balance}
\usepackage{setspace}
\usepackage[hidelinks]{hyperref}
\newtheorem{definition}{Definition}
\newtheorem{lemma}{Lemma}
\newtheorem{theorem}{Theorem}
\newcommand{\BB}{\mathbb{B}}
\newcommand{\FF}{\mathbb{F}}
\newcommand{\cC}{\mathcal{C}}

\newcommand{\partyking}{P_{\mathsf{king}}}

\newcommand{\Dec}{\operatorname{Dec}}

\newcommand{\Mpoly}{\mathsf{M}}
\makeatletter
\def\@IEEEabskeysecsize{\normalsize}
\let\originalbibliography\thebibliography
\renewcommand{\thebibliography}[1]{\originalbibliography{#1}\normalsize}
\makeatother
\title{Big Secrets Small Shares: Revisiting Secure Multi-Party Computation}
\author{\IEEEauthorblockN{Wilton Kim\IEEEauthorrefmark{1}, Van Long Phuoc Pham\IEEEauthorrefmark{2},
Stanislav Kruglik\IEEEauthorrefmark{3}, Han Mao Kiah\IEEEauthorrefmark{1}}
\IEEEauthorblockA{\IEEEauthorrefmark{1}Nanyang Technological University, Singapore
\quad \texttt{wilton.kim@ntu.edu.sg, hmkiah@ntu.edu.sg}\\
\IEEEauthorrefmark{2}Brown University, USA \quad \texttt{phuoc\_pham\_van\_long@brown.edu}\\
\IEEEauthorrefmark{3}Technical University of Denmark, Denmark \quad \texttt{stakr@dtu.dk}}}
\begin{document}
\maketitle
\thispagestyle{empty}
\begin{abstract}
At CRYPTO 2007, Damg{\aa}rd and Nielsen introduced a preprocessing approach to multi-party computation (MPC) to achieve linear communication complexity. Another approach to reduce communication complexity was proposed by Cramer, Damg{\aa}rd, and de Haan at EUROCRYPT 2007, which involved using ramp-secret sharing schemes to jointly share base-field components of secrets from the extension field. In this paper, we combine these two ideas to achieve a non-asymptotic reduction of communication cost in a multi-phase MPC setup. We introduce a new modification of the ramp secret sharing scheme, called compact sharing, and tailor the MPC protocol to accommodate this modification. Compact sharing masks the polynomial coefficients that would otherwise leak during degree reduction. The protocol provides perfect security against static semi-honest adversaries under an explicit tradeoff between the corruption threshold and the extension degree. When the extension degree scales linearly with the number of parties, the gate-evaluation core achieves constant communication overhead relative to an extension-field element, amortized over full preprocessing batches; input and output communication is counted separately.
\end{abstract}
\begin{IEEEkeywords}
secure multiparty computation, ramp secret sharing, finite fields, communication complexity
\end{IEEEkeywords}

\section{Introduction}
Secure Multi-Party Computation (MPC~\cite{yao1982protocols}) enables a set of parties to jointly compute a function of their private inputs without revealing their inputs to each other. Formally, given $n$ parties $P_1,\ldots,P_n$, each holding a private input $x_i$, the goal is to securely compute a function $f(x_1,\ldots,x_n)$ while revealing only what is implied by the prescribed output and the corrupted parties' inputs. Information-theoretic MPC was established by Ben-Or, Goldwasser, and Wigderson (BGW)~\cite{wigderson1988completeness} and independently by Chaum, Cr{\'e}peau, and Damg{\aa}rd~\cite{chaum1988multiparty}, with secret sharing as a fundamental building block.

In this work, we construct an information-theoretic MPC protocol against semi-honest adversaries with a strong honest majority. Our protocol focuses on computation over extension fields. Specifically, for an extension field $\FF$ of degree $e$ over a base field $\BB$, we introduce an MPC protocol that uses shares from only the small field $\BB$. Each share therefore has $1/e$ of the size of an extension-field element. Such a reduction in share size lowers the overall MPC communication bandwidth at the cost of a reduced corruption threshold.

\subsection{Related Work and Our Contribution}

\paragraph{Information-theoretic MPC}
Given secrets in a finite field $\mathbb{F}$, the BGW~\cite{wigderson1988completeness} protocol achieves communication complexity that is quadratic in the number of parties, i.e., $O(n^2 \log |\mathbb{F}|)$ bits of communication per multiplication gate. Later, a breakthrough by Damg\aa rd and Nielsen~\cite{damgaard2007scalable} introduced a preprocessing approach that dramatically improved this dependency to linear, more specifically, $O(n \log |\mathbb{F}|)$ bits of communication per gate. Around the same time, Cramer, Damg\aa rd, and de Haan~\cite{cramer2007atomic} presented a scheme known as atomic MPC that improves the BGW~\cite{wigderson1988completeness} protocol in a different direction, namely achieving $O(n^2 \log |\mathbb{B}|)$ bits per gate when $\mathbb{F}$ is an extension field of $\mathbb{B}$. We combine these approaches by introducing compact sharing to prevent leakage during degree reduction. Under the stated parameter constraints, and amortizing over full preprocessing batches, the protocol communicates $O(n\log|\BB|)=O((n/e)\log|\FF|)$ bits per generalized multiplication gate, excluding direct input sharing and output reconstruction and dissemination. Its communication overhead relative to one extension-field element is therefore $O(n/e)$, becoming constant when $e=\Theta(n)$.

\paragraph{MPC with small communication overhead}
Several recent works further advanced the communication efficiency of secure MPC. Goyal et al.~\cite{goyal2020} and subsequently ATLAS~\cite{goyal2021atlas} optimized the preprocessing and online phases, respectively, achieving constant-factor improvements. Other notable protocols by Goyal, Polychroniadou, and Song~\cite{goyal2021hall}, and subsequent extensions~\cite{goyal2022transformation}, applied packed secret sharing~\cite{franklin1992communication}, bundling $k$ secrets together, thereby reducing per-secret communication by a factor of $k$ and achieving $O\big(n \frac{\log |\mathbb{F}|}{k}\big)$ bits of communication per gate. However, such sharing transformation approaches introduce complexities around the ordering and unpacking of secrets that incur additional communication overhead, and also need an $O(\mathrm{poly}(n))$ computation overhead per multiplication, in contrast to the $O(n \log^2 n)$ total computation ($O(\log^2 n)$ per party) in this paper. Another recent line of work based on unpacked secret sharing~\cite{garg2024scalable} achieves constant communication overhead for rings $\mathbb{Z}_p$. However, the current state-of-the-art protocols in this line of work require the ring bit size to be quadratic in the number of parties, i.e., $\log p = O(n^2 \log n)$, which is impractical in many settings. Hegde, Pham, and Wang subsequently improved its large-modulus requirement~\cite{hegde2026minimal}. These results concern prime fields and integer residue rings, whereas our construction directly represents arithmetic in a specified finite extension field through base-field polynomial coefficients.

\paragraph{Related secret sharing schemes}
In the context of bandwidth-efficient secret sharing, some previous works~\cite{huang2017secret,ding2021communication} have constructed secret sharing schemes for extension fields $\mathbb{F}$ for which secret reconstruction uses only elements from the base field $\mathbb{B}$. Specifically,~\cite{huang2017secret} proposed secret sharing schemes for extension-field elements utilizing both array codes and trace-repair techniques. Building on this,~\cite{ding2021communication} encodes secrets in the extension field similarly to a standard Shamir secret sharing scheme. To reduce secret reconstruction bandwidth, the scheme introduces a share transformation step, requiring parties to locally apply a trace function to their shares to project them into the base field $\mathbb{B}$ (or a subfield). However, an MPC protocol from these schemes has not yet been constructed.

\section{Preliminaries and Atomic Sharing}\label{sec:prelim}

\subsection{Fields, Parties, and Security Model}
We fix a finite extension field $\FF = \BB[X]/(F(X))$, where the base field $\BB$ has size $q$ and $F(X)$ is a monic irreducible polynomial of degree $e \ge 2$. Let $\omega \in \FF$ be a root of $F(X)$. Each element $a \in \FF$ can be uniquely written as a $\BB$-linear combination of powers of $\omega$:
\begin{equation}
  a = \sum_{j=0}^{e-1} a_j \omega^j, \qquad
  \tilde{a}(X) = \sum_{j=0}^{e-1} a_j X^j, \qquad a_j \in \BB.
\end{equation}
This representation allows us to interpret $a \in \FF$ as a polynomial $\tilde{a}(X) \in \BB[X]$ of degree at most $e-1$. We consider $n$ parties $P_1, \dots, P_n$ indexed by $[n] := \{1, \dots, n\}$, where each party $P_i$ is assigned a distinct nonzero evaluation point $\alpha_i \in \BB \setminus \{0\}$, requiring $n < q$.

We assume parties communicate over pairwise private channels in the presence of a static, semi-honest adversary that corrupts a subset $T\subseteq[n]$ of at most $c\ge 1$ parties. Corrupted parties faithfully follow the protocol specification, but the adversary pools their internal states and received messages to infer private information. In particular, the designated reconstruction party $\partyking$ may belong to $T$. A protocol $\pi$ securely evaluates an arithmetic circuit $C$ over $\FF$ (where all parties learn the output) with perfect simulation security~\cite{asharov2017full} if, for every $T \subseteq [n]$ with $|T| \le c$, there exists a probabilistic polynomial-time simulator $\mathcal{S}$ such that for every input vector $\boldsymbol{x} = (x_1, \dots, x_n) \in \FF^n$,
\begin{equation}
  \mathcal{S}(T, \boldsymbol{x}_T, C(\boldsymbol{x})) \equiv \mathsf{View}^{\pi}_T(\boldsymbol{x}),
\end{equation}
where $\boldsymbol{x}_T = (x_i)_{i \in T}$, $\mathsf{View}^{\pi}_T(\boldsymbol{x})$ denotes the joint view of the parties in $T$ during an execution of $\pi$, and $\equiv$ denotes identical probability distributions. Throughout this paper, $\log$ denotes the base-2 logarithm, and communication complexity measures the total number of bits transmitted between distinct parties.

\subsection{Atomic Sharing}
To communicate using elements in $\BB$, we utilize the atomic sharing~\cite{cramer2007atomic}. This construction embeds an extension-field secret into the low-degree coefficients of a polynomial over $\BB$ and distributes its evaluations as in Shamir's secret sharing~\cite{shamir1979share}.

\begin{definition}[Atomic sharing~\cite{cramer2007atomic}]
For $d \ge 2e-1$, a $d$-atomic sharing of $a \in \FF$ is the vector $[a; d] = (f_a(\alpha_i))_{i=1}^n \in \BB^n$, where party $P_i$ holds the share $[a; d]_i = f_a(\alpha_i)$ of the polynomial
\begin{equation}
  f_a(X) = \tilde{a}(X) + X^{2e-1} z_a(X), \qquad \deg z_a \le d - 2e + 1,
\end{equation}
whose noise coefficients in $z_a(X)$ are chosen independently and uniformly at random from $\BB$. To reconstruct $a$, one interpolates $f_a(X)$ from any $d+1$ shares and extracts its first $e$ coefficients.
\end{definition}

By construction, the coefficients of $f_a(X)$ of degrees $e, \dots, 2e-2$ are zero. Note that $z_a(X)$ contains $d - 2e + 2$ independent uniform coefficients and the evaluation points $\alpha_i$ are distinct and nonzero. Thus, any $d - 2e + 2$ shares leak no information about $a$, while any $d + 1$ shares suffice to reconstruct $a$. That is, $d$-atomic sharing is a $(d - 2e + 2)$-private and $(d + 1)$-correct ramp secret sharing scheme~\cite{blakley1984security}.

Atomic sharing is $\BB$-linear: given $[a; d]_i$ and $[b; d]_i$, party $P_i$ can locally add them to obtain a valid atomic share of $a+b$. In contrast, locally multiplying two atomic shares yields an evaluation of the product polynomial
\begin{equation}
  f_a(X) f_b(X) = \tilde{a}(X)\tilde{b}(X) + X^{2e-1} h(X)
\end{equation}
of degree at most $2d$, where the first $2e-1$ coefficients form the unreduced product $\tilde{a}(X)\tilde{b}(X)$ of degree at most $2e-2$. The local products do not form a valid $d$-atomic sharing $[ab; d]$ for subsequent circuit gates since the polynomial degree has doubled to $2d$, and the gap degrees $e, \dots, 2e-2$ are no longer zero. Converting these local products back into a $d$-atomic sharing $[ab; d]$ therefore requires a degree- and modulo-reduction step.

\section{Compact Sharing}
\label{sec:compact}
In this section, we formally define \emph{compact sharing} and discuss its application in our MPC protocol. Compact sharing is introduced to resolve the information leakage that occurs when one tries to combine atomic sharing directly with the preprocessing approach of Damg{\aa}rd--Nielsen~\cite{damgaard2007scalable}. Whenever degree-$2d$ sharings are used below, assume $n\ge2d+1$.

\subsection{Why an Atomic Mask Is Insufficient}
In the usual preprocessing approach, we mask a product with a random share before sending it to $\partyking$. However, if we mask using a random atomic sharing $[r;2d]$, then the corresponding atomic polynomial has zero coefficients at degrees $e,e+1,\ldots,2e-2$. Hence, when $\partyking$ reconstructs $f_a(X)f_b(X)+f_r(X)$, the coefficients in these gap degrees remain unmasked. This leaks partial information about the operands. For example, let $e=2$, with $\tilde{a}(X) =a_0+a_1X$ and $\tilde{b}(X)=b_0+b_1X$. The low-degree part of the masked polynomial is
\begin{equation}
 (a_0b_0+r_0)+(a_0b_1+a_1b_0+r_1)X+a_1b_1X^2.
\end{equation}
The coefficient $a_1b_1$ is exposed. Masking only the reduced field element is insufficient because $\partyking$ sees the entire interpolated polynomial.

\subsection{Construction and Reconstruction of Compact Sharing}
Compact sharing is designed to mask exactly these coefficients. Its low-degree part is a randomized non-reduced representative of the secret.
\begin{definition}[Compact sharing]\label{def:compact}
For $D\ge2e-1$, a $D$-compact sharing of $a\in\FF$ is $\langle a;D\rangle=(g_a(\alpha_i))_{i=1}^n$, where
\begin{equation}\label{eq:compact}
 g_a(X)=\widetilde a(X)+F(X)\rho_a(X)+X^{2e-1}Z_a(X).
\end{equation}
The polynomials $\rho_a$ and $Z_a$ have degrees at most $e-2$ and $D-2e+1$, respectively, and independently uniform coefficients in $\BB$. To reconstruct, interpolate $g_a$ from any $D+1$ shares, retain its coefficients of degrees $0,\ldots,2e-2$, and reduce the resulting polynomial modulo $F$.
\end{definition}
The key property of $g_a(X)$ is that its first $2e-1$ coefficients form a randomized polynomial $\tilde{a}(X) + F(X)\rho_a(X)$ of degree at most $2e-2$ that reduces to $\tilde{a}(X)$ modulo $F(X)$, while the high-degree noise $X^{2e-1} Z_a(X)$ is discarded prior to modulo reduction.

\begin{lemma}\label{lem:compact}
$D$-compact sharing is $(D - 2e + 2)$-private and $(D + 1)$-correct. Moreover, if $r$ is chosen uniformly at random from $\FF$, then its compact sharing polynomial $g_r(X)$ is uniformly distributed over the space $\BB[X]_{\le D}$ of all polynomials of degree at most $D$ over $\BB$.
\end{lemma}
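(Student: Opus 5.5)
The plan is to prove the uniformity statement first, since the other two claims follow easily from it.

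\emph{Step 1: a linear map.} Consider the $\BB$-linear map
\[
\Phi:\BB[X]_{\le e-1}\times\BB[X]_{\le e-2}\times\BB[X]_{\le D-2e+1}\to\BB[X]_{\le D},\qquad (\tilde a,\rho,Z)\mapsto \tilde a+F\rho+X^{2e-1}Z .
\]
The domain has dimension $e+(e-1)+(D-2e+2)=D+1$, which equals $\dim\BB[X]_{\le D}$. So it suffices to show $\Phi$ is injective.

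\emph{Step 2: injectivity.} Suppose $\tilde a+F\rho+X^{2e-1}Z=0$. The summand $\tilde a+F\rho$ has degree at most $2e-2$, while every monomial of $X^{2e-1}Z$ has degree at least $2e-1$. Hence both vanish separately: $Z=0$ and $\tilde a+F\rho=0$. If $\rho\ne 0$, then $F\rho$ has degree at least $e$, because $F$ is monic of degree $e$. This contradicts $\deg\tilde a\le e-1$. So $\rho=0$, and then $\tilde a=0$.

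Thus $\Phi$ is a bijection. When $r$ is uniform on $\FF$, the polynomial $\tilde r$ is uniform on $\BB[X]_{\le e-1}$. It is also independent of the uniform $\rho_r$ and $Z_r$. Therefore $g_r=\Phi(\tilde r,\rho_r,Z_r)$ is uniform on $\BB[X]_{\le D}$.

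\emph{Step 3: correctness.} Any $D+1$ shares at distinct points determine $g_a$ uniquely by Lagrange interpolation. By the degree separation in Step 2, the coefficients of degrees $0,\dots,2e-2$ are exactly $\tilde a+F\rho_a$. Reducing modulo $F$ gives $\tilde a$, and hence $a$.

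\emph{Step 4: privacy.} Fix a set $S$ of $t=D-2e+2$ parties. For a fixed $a$, the shares on $S$ equal $(\tilde a+F\rho_a)(\alpha_i)+\alpha_i^{2e-1}Z_a(\alpha_i)$. The map $Z\mapsto(\alpha_i^{2e-1}Z(\alpha_i))_{i\in S}$ from $\BB[X]_{\le t-1}$ to $\BB^t$ is a bijection. Indeed, its matrix is a Vandermonde matrix on distinct points times the diagonal matrix $\mathrm{diag}(\alpha_i^{2e-1})$, which is invertible because every $\alpha_i\neq0$. So the term coming from $Z_a$ is uniform on $\BB^t$ and independent of $\rho_a$. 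Adding an independent term does not change uniformity, so the shares on $S$ are uniform regardless of $a$. The same then holds for any smaller set of parties.

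\emph{Main obstacle.} The only delicate point is the injectivity argument in Step 2. It needs the observation that the $F\rho$ part cannot cancel the low-degree coefficients of $\tilde a$ because of its degree; the rest is dimension counting.
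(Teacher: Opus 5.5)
Your proof is correct and follows the paper's argument: correctness by interpolation and reduction of the low-degree block modulo $F$, privacy via the invertible matrix $\mathrm{diag}(\alpha_i^{2e-1})$ times a Vandermonde matrix, and uniformity via the bijection $(\tilde r,\rho_r,Z_r)\mapsto g_r$. The only difference is cosmetic: you get this bijection from injectivity plus a dimension count, while the paper uses uniqueness of Euclidean division by $F$ on the degree-$\le 2e-2$ part and then appends the independent coefficients of $Z_r$, which amounts to the same thing.
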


\begin{IEEEproof}
Since $\deg g_a \le D$, any $D+1$ shares uniquely determine $g_a(X)$ via Lagrange interpolation, and reducing its low-degree part modulo $F(X)$ yields $\tilde{a}(X) + F(X)\rho_a(X) \equiv \tilde{a}(X) \pmod{F(X)}$, establishing $(D+1)$-correctness.

For privacy, fix $a \in \FF$ and $\rho_a(X)$, and let $t = D - 2e + 2$. For any $t$ distinct nonzero evaluation points $\beta_1, \dots, \beta_t \in \BB \setminus \{0\}$, evaluating the noise term $X^{2e-1} Z_a(X)$ corresponds to multiplying the $t$ independent uniform coefficients of $Z_a(X)$ by the matrix $\mathrm{diag}(\beta_1^{2e-1}, \dots, \beta_t^{2e-1})(\beta_i^{j-1})_{i,j=1}^t$. Since both the diagonal matrix and the Vandermonde matrix are invertible, the noise evaluations are uniformly distributed over $\BB^t$, establishing $(D - 2e + 2)$-privacy.

Finally, by Euclidean polynomial division by the monic degree-$e$ polynomial $F(X)$, every polynomial in $\BB[X]_{\le 2e-2}$ can be uniquely written as $F(X)\rho_r(X) + \tilde{r}(X)$ with quotient $\deg \rho_r \le e-2$ and remainder $\deg \tilde{r} \le e-1$. Thus, the map $(\tilde{r}, \rho_r) \mapsto \tilde{r}(X) + F(X)\rho_r(X)$ is a bijection onto $\BB[X]_{\le 2e-2}$. When $r \in \FF$ is uniform, $\tilde{r}(X)$ and $\rho_r(X)$ are independent and uniform, making the low-degree part uniform over $\BB[X]_{\le 2e-2}$. Combined with the independent uniform coefficients of $Z_r(X)$ in degrees $2e-1, \dots, D$, the full polynomial $g_r(X)$ is uniform over $\BB[X]_{\le D}$.
\end{IEEEproof}

By Lemma~\ref{lem:compact}, masking a degree-$2d$ product polynomial $f_a(X)f_b(X)$ with a random $2d$-compact polynomial $g_r(X)$ randomizes all coefficients up to degree $2d$, while shifting the underlying secret in $\FF$ by $r$. Consequently, our preprocessing phase generates matched pairs $([r; d], \langle r; 2d \rangle)$ consisting of a $d$-atomic sharing and a $2d$-compact sharing of the same random element $r \in \FF$, using independent noise polynomials conditioned on $r$.

To illustrate how a matched pair $([r; d], \langle r; 2d \rangle)$ enables degree and modulo reduction for a multiplication gate, suppose each party $P_i$ holds $[a; d]_i$ and $[b; d]_i$:
\begin{enumerate}
  \item Party $P_i$ locally multiplies its atomic shares, masks the product with its compact share, and sends $[a; d]_i [b; d]_i + \langle r; 2d \rangle_i \in \BB$ to $\partyking$.
  \item Party $\partyking$ interpolates $f_a(X)f_b(X) + g_r(X)$, retains the first $2e-1$ coefficients, and reduces modulo $F(X)$ to recover the masked product $ab + r \in \FF$.
  \item Party $\partyking$ samples a fresh atomic sharing $[ab + r; d]$ and sends $[ab + r; d]_i \in \BB$ to each party $P_i$.
  \item Party $P_i$ locally subtracts $[r; d]_i$ to obtain the valid atomic share $[ab; d]_i = [ab + r; d]_i - [r; d]_i$.
\end{enumerate}
Here, the compact sharing $\langle r; 2d \rangle$ masks the unreduced product polynomial opened to $\partyking$, while the companion atomic sharing $[r; d]$ allows the parties to locally strip the mask $r$ after degree reduction. Section~\ref{sec:mpc} extends this procedure to general multiplication gates whose inputs are affine combinations over $\FF$.
\section{Small Shares MPC from Atomic-Compact Sharing}\label{sec:mpc}
Following the preprocessing paradigm of~\cite{damgaard2007scalable}, our MPC protocol consists of two phases: a preprocessing phase that generates batches of random matched atomic-compact sharing pairs, and an online evaluation phase that evaluates the arithmetic circuit over secret shares.

\subsection{Circuit Representation and Parameters}
While atomic sharing is homomorphic over $\BB$, multiplying a shared wire $a \in \FF$ by a public scalar $s \in \FF \setminus \BB$ corresponds to multiplying $f_a(X)$ by $\tilde{s}(X)$ of degree at most $e-1$. This increases the polynomial degree from $d$ to $d + e - 1$ and populates the gap degrees $e, \dots, 2e-2$, requiring a degree- and modulo-reduction step just like multiplication of two secrets.

To avoid performing a separate reduction for every scalar multiplication, we view the arithmetic circuit $C$ over $\FF$ as a straight-line program with fan-in-two multiplication gates and absorb all intermediate affine operations over $\FF$ into the inputs of the next multiplication gate. Thus, each non-linear multiplication gate computes a product $AB$, where
\begin{equation}\label{eq:mult-gate}
  A = s + \sum_{j \in S} s_j a_j, \qquad B = t + \sum_{k \in R} t_k b_k,
\end{equation}
all scalars $s, s_j, t, t_k$ lie in $\FF$, and $a_j, b_k$ are circuit inputs or outputs of prior multiplication gates. Any trailing affine combination at the output of the circuit is evaluated as a multiplication gate~\eqref{eq:mult-gate} with $B = 1$; thus, the circuit consists of $\mathcal{C}$ generalized multiplication gates of the form~\eqref{eq:mult-gate}, and every circuit output wire is the output of a multiplication gate.

To tolerate $c$ semi-honest corruptions while enabling $\partyking$ to reconstruct degree-$2d$ polynomials from $n$ shares, we set
\begin{equation}\label{eq:params}
  d = c + 2e - 2, \qquad D = 2d, \qquad 2c + 4e - 3 \le n < q.
\end{equation}
Under~\eqref{eq:params}, the privacy threshold of $d$-atomic sharing is $d - 2e + 2 = c$, and the reconstruction threshold of $2d$-compact sharing satisfies $2d + 1 = 2c + 4e - 3 \le n$. These conditions capture the tradeoff between the extension degree $e$ and the corruption threshold $c$.

\subsection{Preprocessing Phase}
In the preprocessing phase, all parties jointly generate batches of random matched pairs $([r_k; d], \langle r_k; 2d \rangle)$. Set $m = n - c$ and let $M \in \BB^{m \times n}$ be the Vandermonde matrix with entries $M_{ki} = \alpha_i^{k-1}$. In each batch, each party $P_i$ samples $\sigma_i \in \FF$ uniformly at random, independently generates $[\sigma_i; d]$ and $\langle \sigma_i; 2d \rangle$ conditioned on $\sigma_i$, and sends the pair $([\sigma_i; d]_j, \langle \sigma_i; 2d \rangle_j) \in \BB^2$ to each party $P_j$. Each party $P_j$ then locally computes, for $1 \le k \le m$,
\begin{equation}\label{eq:preproc}
  \hspace{-2mm}[r_k; d]_j = \sum_{i=1}^n M_{ki} [\sigma_i; d]_j, \,\,
  \langle r_k; 2d \rangle_j = \sum_{i=1}^n M_{ki} \langle \sigma_i; 2d \rangle_j.
\end{equation}
Because the Vandermonde entries $M_{ki}$ lie in the base field $\BB$ and both sharing schemes are $\BB$-linear, \eqref{eq:preproc} yields valid matched pairs for $r_k = \sum_{i=1}^n M_{ki} \sigma_i \in \FF$ without increasing the polynomial degrees. Since any $m$ columns of $M$ form an invertible Vandermonde matrix over $\BB$, the $m = n - c$ honest parties' contributions ensure that the extracted masks $r_1, \dots, r_m$ are independent and uniform even when conditioned on the $c$ corrupted parties' contributions (proved formally in Section~\ref{sec:security}). Each batch communicates $2n(n-1)$ elements of $\BB$ and yields $m = n - c$ matched pairs. Since each multiplication gate consumes three fresh pairs (see Section~\ref{subsec:eval}), the parties execute $\lceil 3\mathcal{C}/(n-c) \rceil$ batches in total.

\subsection{Evaluation Phase}\label{subsec:eval}
Once the required preprocessing pairs are prepared, the parties evaluate the circuit gate by gate as follows.

\paragraph{Input gates}
For each circuit input $x \in \FF$ held by a party $P_i$, party $P_i$ samples a $d$-atomic sharing $[x; d]$ and sends $[x; d]_j \in \BB$ to each other party $P_j$.

\paragraph{Multiplication gates}
To evaluate a multiplication gate $AB$ as in~\eqref{eq:mult-gate} using three preprocessed pairs $([r_\ell; d], \langle r_\ell; 2d \rangle)$ for $\ell = 1, 2, 3$:
\begin{enumerate}
  \item Each party $P_i$ locally computes the masked evaluations
  \begin{align*}
    u_i &= \langle r_1; 2d \rangle_i + \tilde{s}(\alpha_i) + \sum_{j \in S} \tilde{s}_j(\alpha_i) [a_j; d]_i, \\
    v_i &= \langle r_2; 2d \rangle_i + \tilde{t}(\alpha_i) + \sum_{k \in R} \tilde{t}_k(\alpha_i) [b_k; d]_i,
  \end{align*}
  and sends $u_i, v_i \in \BB$ to $\partyking$.
  \item Party $\partyking$ applies $2d$-compact reconstruction to $(u_i)_{i=1}^n$ and $(v_i)_{i=1}^n$ to recover $U = A + r_1 \in \FF$ and $V = B + r_2 \in \FF$, independently samples fresh $d$-atomic sharings $[U; d]$ and $[V; d]$, and sends $[U; d]_i, [V; d]_i \in \BB$ to each party $P_i$.
  \item Each party $P_i$ locally unmasks the affine combinations and multiplies them to obtain
  \begin{align*}
    a'_i &= [U; d]_i - [r_1; d]_i, \qquad b'_i = [V; d]_i - [r_2; d]_i, \\
    w_i &= a'_i b'_i + \langle r_3; 2d \rangle_i,
  \end{align*}
  and sends $w_i \in \BB$ to $\partyking$.
  \item Party $\partyking$ applies $2d$-compact reconstruction to $(w_i)_{i=1}^n$ to recover $W = AB + r_3 \in \FF$, samples a fresh $d$-atomic sharing $[W; d]$, and sends $[W; d]_i \in \BB$ to each party $P_i$. Party $P_i$ locally outputs $[AB; d]_i = [W; d]_i - [r_3; d]_i$.
\end{enumerate}
Crucially, $\partyking$ returns base-field shares $[U; d]_i, [V; d]_i, [W; d]_i \in \BB$ rather than broadcasting extension-field elements in $\FF$. Steps~1--2 convert the two $\FF$-affine combinations $A$ and $B$ in parallel into $d$-atomic sharings, and Steps~3--4 evaluate their product.

\paragraph{Output gates}
Given the atomic sharing $[y; d]$ of an output wire, each party $P_i$ sends $[y; d]_i \in \BB$ to $\partyking$, who reconstructs $y \in \FF$ and sends $y$ to every other party.

\subsection{Correctness of the MPC Protocol}
Correctness of the input and output gates follows directly from the $(d+1)$-correctness of atomic sharing since $n \ge 2d + 1 > d + 1$. For each $j \in S$ in a multiplication gate, the atomic sharing $[a_j; d]$ corresponds to $f_{a_j}(X) = \tilde{a}_j(X) + X^{2e-1} z_{a_j}(X)$. Multiplying its evaluation by the public evaluation $\tilde{s}_j(\alpha_i)$ and summing alongside $\tilde{s}(\alpha_i)$ evaluates a polynomial of degree at most $d + e - 1 \le 2d$. Since all noise terms remain divisible by $X^{2e-1}$, its coefficients of degrees $0, \dots, 2e-2$ equal $\tilde{s}(X) + \sum_{j \in S} \tilde{s}_j(X)\tilde{a}_j(X)$, which reduces modulo $F(X)$ to $\tilde{A}(X)$.

By adding the compact share $\langle r_1; 2d \rangle_i$, the parties mask this polynomial, and compact reconstruction modulo $F(X)$ yields $U = A + r_1$. By identical logic, $\partyking$ reconstructs $V = B + r_2$. Because atomic sharing is $\BB$-linear, each party locally subtracts its preprocessed atomic shares to obtain valid atomic shares:
\begin{equation*}
  [U; d]_i - [r_1; d]_i = [A; d]_i, \qquad [V; d]_i - [r_2; d]_i = [B; d]_i.
\end{equation*}
The local product of these shares evaluates a degree-$2d$ polynomial whose coefficients in degrees $0, \dots, 2e-2$ equal $\tilde{A}(X)\tilde{B}(X)$. Adding $\langle r_3; 2d \rangle_i$ allows $\partyking$ to recover $W = AB + r_3$, and $[W; d] - [r_3; d]$ yields a valid atomic sharing of $AB$. Induction over the multiplication gates establishes correctness of the circuit evaluation.

\section{Security and Communication Analysis}\label{sec:security}

\subsection{Perfect Privacy of the Joint View}
To prove perfect privacy against a corrupted coalition $T \subseteq [n]$ with $|T| \le c$ (which may include $\partyking$), we must show that the adversary's joint view---comprising its preprocessing messages, internal randomness, all polynomials opened to $\partyking$ during multiplication gates, and the output shares---can be perfectly simulated from $\boldsymbol{x}_T$ and the circuit output alone. We establish this in three steps: (1) characterizing the conditional distribution of preprocessed pairs, (2) showing that online conversions leak no information via a canonical linear lift $L_T$, and (3) constructing the full simulator $\mathcal{S}$.

\paragraph{Conditional privacy of preprocessing}
Let $E_T$ denote the map evaluating a polynomial at the corrupted points $\{\alpha_i : i \in T\}$, and define the $\BB$-linear decoding map on $\BB[X]_{\le 2d}$ by
\begin{equation*}
  \mathrm{Dec}(p) = (p \bmod X^{2e-1}) \bmod F(X) \in \FF.
\end{equation*}
For a fresh matched pair $(f_r, g_r)$ with uniform $r \in \FF$, Lemma~\ref{lem:compact} implies that $g_r$ is uniform over $\BB[X]_{\le 2d}$. Moreover, for any fixed $g_r$ (which fixes $r = \mathrm{Dec}(g_r)$), the independent noise in $f_r$ makes $E_T f_r$ uniform over $\BB^{|T|}$. Thus, conditioned on the corrupted parties' shares $(E_T f_r, E_T g_r)$, the pair $(f_r, g_r)$ is uniformly distributed over the affine space of matched pairs $(\hat{f}, \hat{g})$ satisfying $\mathrm{Dec}(\hat{f}) = \mathrm{Dec}(\hat{g})$ and $(E_T \hat{f}, E_T \hat{g}) = (E_T f_r, E_T g_r)$.

This conditional distribution is preserved by Vandermonde batch extraction~\eqref{eq:preproc}. Condition on all preprocessing shares sent to $T$ and all corrupted dealers' randomness. Each honest dealer's pair is uniform on a coset of the joint evaluation kernel $\mathcal{K}_T = \{(\hat{f}, \hat{g}) : \mathrm{Dec}(\hat{f}) = \mathrm{Dec}(\hat{g}),\, E_T \hat{f} = E_T \hat{g} = 0\}$. Selecting any $m = n - c$ honest dealers and fixing the remaining dealers' contributions, the corresponding invertible $m \times m$ Vandermonde submatrix of $M$ maps $\mathcal{K}_T^m$ bijectively onto itself. Thus, the $m$ extracted pairs in each batch are conditionally independent and follow the exact same conditional distribution as fresh matched pairs.

\paragraph{Online gate privacy}
To relate real and simulated wire polynomials during online evaluation, let $u = |T| \le c$ and define the explicit $\BB$-linear map $L_T : \FF \to \BB[X]_{\le d}$ for any $\Delta \in \FF$ by
\begin{equation}\label{eq:lift}
  L_T(\Delta)(X) = \tilde{\Delta}(X) + X^{2e-1} \sum_{j=0}^{u-1} z_j(\Delta) X^j,
\end{equation}
where $(z_0(\Delta), \dots, z_{u-1}(\Delta)) \in \BB^u$ is the unique solution to $L_T(\Delta)(\alpha_i) = 0$ for all $i \in T$ (with the sum empty if $u = 0$). Since the evaluation points $\{\alpha_i : i \in T\}$ are distinct and nonzero, the existence and uniqueness of $(z_j(\Delta))_{j=0}^{u-1}$ are guaranteed by the invertibility of the corresponding Vandermonde matrix. By construction, $L_T(\Delta)$ is a valid $d$-atomic polynomial satisfying $\mathrm{Dec}(L_T(\Delta)) = \Delta$ and $E_T L_T(\Delta) = 0$.

During online evaluation, each multiplication gate performs three degree-reduction conversions: two for the affine inputs in Step~1 (where the unmasked polynomial is $\tilde{s}(X) + \sum_{j \in S} \tilde{s}_j(X)f_{a_j}(X)$ or its $B$-analog) and one for the product in Step~3 (where the unmasked polynomial is $f_A(X)f_B(X)$). Fix any such conversion step, and let $p \in \BB[X]_{\le 2d}$ and $p' \in \BB[X]_{\le 2d}$ denote the unmasked polynomials formed in the real and simulated executions, respectively. Suppose by induction that on every wire entering this step, the real atomic polynomial $f_a$ and simulated atomic polynomial $f_{a'}$ satisfy
\begin{equation}\label{eq:wire-invariant}
  f_a(X) - f_{a'}(X) = L_T(a - a')(X).
\end{equation}
Since $E_T L_T(a - a') = 0$, the corrupted parties hold identical wire shares ($E_T f_a = E_T f_{a'}$), and because each party computes its unmasked evaluation locally from its wire shares, the unmasked polynomials agree on $T$ ($E_T p = E_T p'$). Consequently, the affine map on the matched-pair space given by
\begin{equation}\label{eq:coupling}
  (f_r, g_r) \longmapsto (\tilde{f}_r, \tilde{g}_r) := \big(f_r + L_T(\mathrm{Dec}(p - p')),\, g_r + p - p'\big)
\end{equation}
is a bijection on the conditional coset of valid matched pairs: since $E_T L_T(\mathrm{Dec}(p - p')) = 0$ and $E_T(p - p') = 0$, it preserves the corrupted shares $E_T \tilde{f}_r = E_T f_r$ and $E_T \tilde{g}_r = E_T g_r$ as well as the matched-secret condition $\mathrm{Dec}(\tilde{f}_r) = \mathrm{Dec}(\tilde{g}_r)$. Under this coupling, the polynomial opened to $\partyking$ in the simulated execution is $p' + \tilde{g}_r = p + g_r$, which is identical to the real opening and thus leaks no information about $p$ beyond $E_T p$. Moreover, holding $\partyking$'s resharing polynomial $f_U$ fixed, the resulting unmasked wire polynomials $f_s = f_U - f_r$ (for the real wire value $s = \mathrm{Dec}(p)$) and $f_{s'} = f_U - \tilde{f}_r$ (for the simulated wire value $s' = \mathrm{Dec}(p')$) satisfy $f_s - f_{s'} = \tilde{f}_r - f_r = L_T(s - s')$, preserving the invariant~\eqref{eq:wire-invariant} at the output of the conversion.

\paragraph{Full circuit simulation}
The simulator $\mathcal{S}$ uses the corrupted inputs $\boldsymbol{x}_T$, sets all honest inputs to $0 \in \FF$, and executes the preprocessing, input, and multiplication gates honestly. Coupling each honest party's real input polynomial $f_{x_i}$ with its simulated input polynomial $f_0$ via $f_{x_i} = f_0 + L_T(x_i)$ establishes the invariant~\eqref{eq:wire-invariant} at the input wires (with $E_T f_{x_i} = E_T f_0$). Applying~\eqref{eq:coupling} inductively across both affine conversions and the product conversion of every multiplication gate preserves~\eqref{eq:wire-invariant} throughout the circuit while producing an identically distributed online transcript. Finally, for each output wire with true value $y$ and simulated value $y'$, \eqref{eq:wire-invariant} guarantees that the real and simulated output polynomials satisfy $f_y(X) - f_{y'}(X) = L_T(y - y')(X)$. Before opening the output wire to $\partyking$, $\mathcal{S}$ adds $L_T(y - y')(\alpha_i)$ to each honest party's simulated output share $[y'; d]_i$ (leaving corrupted shares unchanged since $E_T L_T(y - y') = 0$). This maps the reconstructed polynomial $f_{y'}(X) + L_T(y - y')(X)$ to the exact conditional distribution of $f_y(X)$, completing the simulation.

\subsection{Exact Communication and Amortization}
Let $\ell_I$ be the number of circuit inputs and $\ell_O$ be the number of circuit outputs in $\FF$. Evaluating $\mathcal{C}$ multiplication gates requires $3\mathcal{C}$ matched pairs, which are generated using $b = \lceil 3\mathcal{C}/(n-c) \rceil$ preprocessing batches. Using $\log q$ bits per base-field element and $\log|\FF| = e \log q$ bits per extension-field element (or $\lceil \log q \rceil$ and $\lceil \log|\FF| \rceil$ under fixed-length binary encodings), the protocol communicates
\begin{equation}\label{eq:comm-exact}
\begin{aligned}
  B_{\mathrm{pre}} &= 2n(n-1)b \log q, \\
  B_{\mathrm{eval}} &= (n-1)(\ell_I + 6\mathcal{C} + \ell_O)\log q + \ell_O(n-1)\log|\FF|.
\end{aligned}
\end{equation}
Here, each input gate communicates $n-1$ base-field elements, each multiplication gate performs three conversions costing $2(n-1)$ base-field elements each, and each output gate communicates $n-1$ base-field elements to $\partyking$ followed by $n-1$ extension-field elements broadcast by $\partyking$.

\begin{theorem}\label{thm:main}
Under~\eqref{eq:params}, the protocol correctly evaluates the circuit with perfect security against at most $c$ static semi-honest corruptions. Excluding input and output gates, the amortized communication cost per multiplication gate over full preprocessing batches is
\begin{equation}\label{eq:comm-amortized}
  \left(6(n-1) + \frac{6n(n-1)}{n-c}\right)\log q
  = O\left(\frac{n^2}{n-c}\,\frac{\log|\FF|}{e}\right) \text{ bits.}
\end{equation}
For finite circuits, the exact communication cost is given by~\eqref{eq:comm-exact}, where rounding up $b$ adds at most one incomplete preprocessing batch of $2n(n-1)\log q$ bits.
\end{theorem}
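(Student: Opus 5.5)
The theorem has three parts: correctness, perfect security, and communication. The plan is to assemble results already established in Sections~\ref{sec:mpc} and~\ref{sec:security}, and to do the counting explicitly.

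\textbf{Correctness.} Under~\eqref{eq:params} we have $n \ge 2d+1$, so every $2d$-compact reconstruction performed by $\partyking$ has enough shares. By Lemma~\ref{lem:compact}, $(2d+1)$-correctness then applies. The gate-by-gate correctness argument of Section~\ref{sec:mpc} shows two things. First, each conversion opens a polynomial of degree at most $2d$ whose coefficients in degrees $0,\dots,2e-2$ reduce modulo $F$ to the masked value. Second, subtracting the matched atomic share leaves a valid $d$-atomic sharing. Induction over the straight-line program of $\mathcal{C}$ generalized multiplication gates, together with $(d+1)$-correctness of atomic sharing at input and output gates, gives correctness.

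\textbf{Security.} I would invoke the three-step simulation argument in order.
\begin{enumerate}
\item Privacy of atomic sharing with threshold $d-2e+2=c$ makes the corrupted shares of every honestly dealt atomic sharing uniform. Lemma~\ref{lem:compact} together with the Vandermonde extraction argument shows that the extracted matched pairs have the same conditional distribution as fresh pairs, given everything $T$ sees in preprocessing.
\item The lift $L_T$ and the coupling~\eqref{eq:coupling} show that each conversion's opening to $\partyking$ is identically distributed in the real and simulated worlds. The same coupling preserves invariant~\eqref{eq:wire-invariant}.
\item The simulator replaces honest inputs by $0$ and corrects the output shares by $L_T(y-y')$.
\end{enumerate}

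\textbf{Main obstacle.} The delicate point is that one matched pair is used in a single conversion. Each coupling must therefore act on a pair that is still conditionally uniform on its coset, independently of earlier couplings. Each gate consumes three fresh pairs, and the extracted pairs are conditionally independent, so the bijections compose into a single bijection on the joint randomness. This yields equality of the full joint views rather than just the per-gate marginals. The case $\partyking\in T$ is handled because the coupled openings are equal as whole polynomials, not only at the points in $T$.

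\textbf{Communication.} I count messages between distinct parties.
\begin{itemize}
\item Preprocessing: each of the $n$ dealers sends two base-field elements to each of the $n-1$ others. This gives $2n(n-1)$ elements per batch, producing $n-c$ pairs.
\item Evaluation: each of the three conversions per gate costs $n-1$ elements to $\partyking$ and $n-1$ back, since $\partyking$'s own share is not sent. This gives $6(n-1)$ elements per gate.
\item Inputs: each input costs $n-1$ elements.
\item Outputs: each output costs $n-1$ base-field elements to $\partyking$ and $n-1$ field elements broadcast by $\partyking$.
\end{itemize}
These counts give~\eqref{eq:comm-exact} with $b=\lceil 3\mathcal{C}/(n-c)\rceil$.

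For full batches, $b = 3\mathcal{C}/(n-c)$, so the preprocessing cost per gate is $3\cdot 2n(n-1)/(n-c)$ elements. Adding the $6(n-1)$ online elements gives~\eqref{eq:comm-amortized}. The asymptotic form follows from $\log q = \log|\FF|/e$ and $n-1 \le n^2/(n-c)$. Finally, $\lceil x\rceil < x+1$, so rounding adds at most one extra batch of $2n(n-1)\log q$ bits.
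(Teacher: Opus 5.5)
Your proposal is correct and follows essentially the same route as the paper: correctness via openings of degree at most $2d$ whose low $2e-1$ coefficients reduce modulo $F$ to the masked value; security via conditional uniformity of the Vandermonde-extracted matched pairs, the $L_T$ coupling, and the zero-input simulator with output correction; and the same element counts giving the exact and amortized costs. Your point that the per-conversion couplings compose into one measure-preserving bijection (each conversion consumes its own conditionally independent pair) makes explicit the paper's ``apply the coupling inductively.'' The one thing to state precisely is the base case of the wire invariant: it needs the coupling $f_{x_i}=f_0+L_T(x_i)$, as in the paper, not just uniformity of the corrupted input shares.
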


When the extension degree scales linearly with the number of parties, i.e., $e = \lambda n$ for a constant $0 < \lambda < 1/4$, condition~\eqref{eq:params} yields $c \le n(1/2 - 2\lambda) + 3/2$ and $n - c = \Theta(n)$. In this regime, $n^2/(n-c) = O(n) = O(e)$, so~\eqref{eq:comm-amortized} simplifies to $O(\log|\FF|)$ bits per multiplication gate, achieving constant amortized communication overhead relative to the plaintext size.

\clearpage
\balance

\end{document}